\begin{document}

\title{Can we reach Pareto optimal outcomes using bottom-up approaches?}

\author{ Victor Sanchez-Anguix\inst{1} and Reyhan Aydo\u{g}an\inst{2,4} and Tim Baarslag\inst{3} and Catholijn M. Jonker\inst{4} }
\institute{Coventry University, Coventry, United Kingdom, \email{ac0872@coventry.ac.uk} \and
 \"{O}zye\u{g}in University, Istanbul,Turkey, \email{reyhan.aydogan@ozyegin.edu.tr} \and
 Centrum Wiskunde \& Informatica, Amsterdam, Netherlands, \email{T.Baarslag@cwi.nl} \and
 Technical University of Delft, Delft, Netherlands, \email{C.M.Jonker@tudelft.nl}}

\maketitle

\begin{abstract}

Traditionally, researchers in decision making have focused on attempting to reach Pareto Optimality using horizontal approaches, where optimality is calculated taking into account every participant at the same time. Sometimes, this may prove to be a difficult task (e.g., conflict, mistrust, no information sharing, etc.). In this paper, we explore the possibility of achieving Pareto Optimal outcomes in a group by using a bottom-up approach: discovering Pareto optimal outcomes by interacting in subgroups. We analytically show that Pareto optimal outcomes in a subgroup are also Pareto optimal in a supergroup of those agents in the case of strict , transitive, and complete preferences. Then, we empirically analyze the prospective usability and practicality of bottom-up approaches in a variety of decision making domains.
\keywords{pareto optimality, agreement technologies, group decision making, multi-agent systems, artificial intelligence}
\end{abstract}

\section{Introduction}
\label{sec:introduction}

Group decision making has been studied within different disciplines with aim of reaching a mutually acceptable outcome. One of the desired properties of that outcome is Pareto optimality. However, reaching Pareto optimal agreements is not straightforward in practice. In open and dynamic environments, decision makers may not know each other's preferences completely. It may even be the case that it becomes more complicated to find Pareto optimal solutions when the number of participants increases, as the number of interactions required to achieve an optimal deal for the group may increase due to internal conflicts or lack of trust.

A number of works in the field focus on finding a global Pareto optimal solution by involving all agents at the same time \cite{ehtamo01,sycara06,hara13,sycara2015}, which may lead to complicated interactions and lengthy decision making processes. However, we believe that, in many situations, agents can benefit from taking a bottom-up approach: calculating Pareto optimal outcomes in subgroups. In other words, we pursue the question of whether or not it is possible to estimate some Pareto optimal outcomes without knowing or predicting the preferences of all agents. In essence, solving the Pareto optimal set problem in a smaller group may be less complicated than in larger groups (e.g., less privacy concerns, less interactions needed, more willingness to cooperate, etc.) and it may provide a relatively important ratio of the final Pareto Optimal outcomes. Such kind of property can be used in some complex group decision making scenarios. Imagine that a group of agents is negotiating in unison with an unknown opponent \cite{team14,team13a,team12a}. If the agents can find the Pareto optimal outcomes within the team, they may use these outcomes in their bidding strategy to reach a Pareto optimal agreement with their opponent.

In this paper we explore bottom-up strategies. For that, first we prove that any Pareto optimal outcome in a subgroup is also Pareto optimal in a larger group that contains the subgroup, as long as agents' preferences are strict linear order. Second, we empirically simulate how bottom-up approaches may perform in realistic scenarios. More specifically, we show that we can obtain a reasonable ratio of the Pareto optimal outcomes within a group of agents by only finding the Pareto optimal outcomes within the subgroup of these agents.

The remainder of this paper is organized as follows: first we present a proof of how Pareto optimal solutions in subgroups are also Pareto optimal in larger groups when agents have strict, transitive, and complete preferences. Section~\ref{sec:applications} discusses some of the implications of the proof, and how it can be applied to solve a wide variety of problems in multi-agent systems. In Section~\ref{sec:experiments}, we empirically validate the theory in practice and analyze empirically compare the ratio of Pareto optimal outcomes within subgroups to the Pareto optimal outcomes within the entire group in a wide variety of real domains. After discussing the related work, we finally conclude the paper with future lines of work. 

\section{Pareto optimality in subgroups}
\label{sec:proof}
In this section we prove that any Pareto optimal outcome in a subgroup of agents is also Pareto optimal in any group of agents containing the subgroup. First, we provide some of the necessary definitions and introduce some notation.

Let $\mathcal{A}=\{a_1,...,a_n\}$ be a set of agents where $k$ is the index of agent $a_{k}$ and $\mathcal{A'}=\{a_1,...,a_m\}$ be a superset of $\mathcal{A}$, $\mathcal{A} \subset \mathcal{A'}$ where $m > n$. $\mathcal{O}$ is the set of all possible solutions in a given domain, and $o \in \mathcal{O}$ represents a possible solution in the domain. We assume that $\succeq_{i}$ represents agent's $a_{i}$ preference relation over outcomes in $\mathcal{O}$. If $o \succeq_{i} o'$ then agent $a_{i}$ likes $o$ at least as well as $o'$, we write  $o \succ_{i} o'$ to denote a strict preference for $o$ and $o = o'$ to denote indifference. We assume that the agents' preference relations are strict, transitive and complete. 

An outcome $o^*$ is Pareto optimal with respect to $\mathcal{A}$ and $\mathcal{O}$, denoted by $po(o^*, \mathcal{A}, \mathcal{O})$ iff \[\nexists o \in \mathcal{O}\ \exists j \leq n \overset{n}{\underset{i=1}{\bigwedge}} o \succeq_{i} o^* \wedge o \succ_{j} o^*.\] 
We denote the set of all Pareto optimal outcomes over $\mathcal{A}$ by $\mathcal{O}^*_\mathcal{A} =\left\{ o^* \in \mathcal{O} \mid po(o^*, \mathcal{A}, \mathcal{O}) \right\}.$

\begin{theorem}
Given a set of outcomes $\mathcal{O}$. For all two sets of agents $\mathcal{A}$ and $\mathcal{A'}$, if $\mathcal{A} \subset \mathcal{A'}$, then $\mathcal{O}^*_\mathcal{A} \subset \mathcal{O}^*_\mathcal{A'}$.
\end{theorem}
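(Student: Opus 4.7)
The plan is to proceed by contrapositive (or equivalently by contradiction). I take an arbitrary $o^{*} \in \mathcal{O}^{*}_{\mathcal{A}}$ and suppose, for contradiction, that $o^{*} \notin \mathcal{O}^{*}_{\mathcal{A'}}$. Unfolding the definition of Pareto optimality over $\mathcal{A'}$, this produces a witness outcome $o \in \mathcal{O}$ and some index $j \leq m$ such that $o \succeq_{i} o^{*}$ for every $a_i \in \mathcal{A'}$ and $o \succ_{j} o^{*}$. My goal is to derive from this same $o$ a violation of Pareto optimality over the smaller group $\mathcal{A}$, which contradicts the choice of $o^{*}$.

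First I would observe that $o \neq o^{*}$, since $o \succ_{j} o^{*}$ is incompatible with $o = o^{*}$ under a strict (asymmetric, irreflexive) preference relation. Then I would invoke the strictness and completeness hypothesis to rule out ties between distinct outcomes: for any agent $a_i$ and any $o \neq o^{*}$, the relation $o \succeq_{i} o^{*}$ must actually be $o \succ_{i} o^{*}$, because otherwise completeness would yield $o^{*} \succ_{i} o$, contradicting strictness combined with $o \succeq_{i} o^{*}$.

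Applying this upgrade to every agent in $\mathcal{A'}$ gives $o \succ_{i} o^{*}$ for all $a_i \in \mathcal{A'}$, and in particular for all $a_i \in \mathcal{A} \subseteq \mathcal{A'}$. Since $\mathcal{A}$ is non-empty, picking any $a_k \in \mathcal{A}$ supplies both the universal weak-dominance part $\bigwedge_{i=1}^{n} o \succeq_{i} o^{*}$ and the existential strict part $o \succ_{k} o^{*}$ needed to witness $\neg\, po(o^{*}, \mathcal{A}, \mathcal{O})$. This contradicts $o^{*} \in \mathcal{O}^{*}_{\mathcal{A}}$ and therefore establishes the desired inclusion.

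The only real obstacle is the careful use of the strictness hypothesis in the second step: without it, the strict-preference witness index $j$ could lie in $\mathcal{A'} \setminus \mathcal{A}$ and fail to transfer down to $\mathcal{A}$, so the argument would break. I would also flag that the theorem is stated with $\subset$ but only $\subseteq$ is actually being proved here; proper strict inclusion would require exhibiting an outcome Pareto optimal for $\mathcal{A'}$ but not for $\mathcal{A}$, which is a separate (and domain-dependent) matter that I would not attempt without extra assumptions on $\mathcal{O}$ and the preference profiles.
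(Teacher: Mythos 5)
Your proof is correct and follows essentially the same route as the paper: a reductio that extracts a weakly dominating outcome $o$ over $\mathcal{A'}$ and uses strictness of preferences to turn that into a strict domination over $\mathcal{A}$. The only difference is cosmetic — the paper case-splits on whether the strictly-preferring agent $a_k$ lies in $\mathcal{A}$, while you upgrade all weak preferences to strict ones at once (and your side remark that the conclusion is really $\subseteq$ rather than $\subset$ is a fair observation about the statement, not a gap in the argument).
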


\begin{proof}
Let us assume by reductio ad absurdum that $\mathcal{A} \subset \mathcal{A'}$, but $\mathcal{O}^*_\mathcal{A} \not \subset \mathcal{O}^*_\mathcal{A'}$. This means there exists an $o^* \in \mathcal{O}^*_\mathcal{A}$ such that $o^* \notin \mathcal{O}^*_\mathcal{A'}$. Expanding the definition of Pareto optimal outcomes, we have
\begin{displaymath}
o^* \notin \{ o \in \mathcal{O} \mid \nexists o' \in \mathcal{O}\ \exists k \leq m, \overset{m}{\underset{i=1}{\bigwedge}} o' \succeq_{i} o \wedge o' \succ_{k} o  \} .
\end{displaymath}
This means there must exist an $o \in \mathcal{O}$ and a $k \leq m$ such that $\overset{m}{\underset{i=1}{\bigwedge}} o \succeq_{i} o^* \wedge o \succ_{k} o^*.$ We consider two scenarios: either $a_k \in \mathcal{A}$ or $a_k \notin \mathcal{A}$.

\begin{itemize}
\item If $a_k \in \mathcal{A}$ then $o$ is an outcome that dominates $o^*$ over $\mathcal{A}$, which is not possible as $o^*$ is Pareto optimal over $\mathcal{A}$.

\item Otherwise, $k > n$, so we have $\overset{n}{\underset{i=1}{\bigwedge}} o \succeq_{i} o^*$. In that case, as $o^*$ is Pareto optimal over $\mathcal{A}$, the condition is only true if all of the agents in $\mathcal{A}$ are indifferent between $o$ and $o^*$. As preferences are strict, that cannot be true either. 
\end{itemize}
Since both sides lead to a contradiction, we have proven the theorem.
\end{proof}

At this point the reader may be wondering how the theorem above behaves in a scenario where agents' preferences are not strict. As we will discuss later, the likeliness of such as scenario is small, but the conclusion of the theorem above may in fact not hold in that case. Basically, an outcome that is Pareto optimal in a subgroup $\mathcal{A}$ may not be Pareto optimal in the group $\mathcal{A'}$ when all of the agents in $\mathcal{A}$ are indifferent between such outcome and another Pareto optimal outcome. Then, one of the two outcomes may not be Pareto optimal with $\mathcal{A'}$ when one of the agents in the group is not indifferent between those outcomes. Nevertheless, as we shall outline in Section~\ref{sec:applicability}, such situations are rare in practice, as all of the agents need to be indifferent between outcomes. This becomes increasingly unlikely as the group size grows and thus, for large enough groups, we can consider that the theorem is true for practically any scenario.

\section{Prospective applications}
\label{sec:applications}
In Section \ref{sec:proof} we have demonstrated that an outcome that is Pareto optimal in a subgroup of agents will also remain Pareto optimal in a larger group\footnote{For strict, transitive, and complete preferences}. It should be highlighted that we are not depicting achieving Pareto optimality as a simple task. However, there is value in computing Pareto optimality in smaller groups as long as we are able to use those solutions in more challenging scenarios:

\begin{itemize}

 \item \textbf{Negotiation teams:} In this scenario, a group of individuals negotiate as a party with opponent(s) to achieve a deal \cite{team14,team13a,team13b,team12a,team12b}. In that case, finding the outcomes that are Pareto optimal within the team may play in favor of the team as (i) if the team sticks to these outcomes while negotiating with opponents, it can ensure efficiency in the final outcome, (ii) the set of calculated deals may be reused in multiple negotiations with different opponents as they remain Pareto optimal, and (iii) finding Pareto optimal outcomes once may reduce the time spent in negotiation threads as the team exactly knows which outcomes are more beneficial for team members. On top of that, one can also assume that team members may be more willing to share information with teammates, which may make easier the search for Pareto optimal outcomes inside the team.
 \item \textbf{Multi-party negotiations:} Some participants in a multi-party negotiation \cite{sycara2015,de2015,aydogan14,esparcia13,hara13,ehtamo01} may decide to collude and bias the agreement with their preferences. For that, the subgroup of agents may calculate Pareto optimal outcomes within the subgroup, and decide on the Pareto optimal outcomes that they plan to use in the upcoming multi-party negotiation. This way, there may be higher probabilities for the negotiation to finish with an outcome that satisfies the subgroups' interests and that is efficient. Another possible application for this proof in multi-party settings is precisely the idea of looking for Pareto optimal agreements within subgroups of agents. For instance, agents with high degrees of trust may decide to share some information that facilitates the search of Pareto optimal outcomes within the subgroup. Then, once outcomes are found in subgroups, these may be shared among all of the agents, and the whole group may need to decide on the most appropriate Pareto optimal outcome.
 \item \textbf{Decision making in open environments}: Open multi-agent systems \cite{argente11,mamsy,hewitt91} have the particularity of being systems where agents enter and leave the system dynamically. In such environments, decision making tasks may suffer from the same characteristic and agents may enter and leave decision making tasks as needed, resulting in a real time problem. For those situations, agents in a decision making task may benefit from a continuous search for Pareto optimal outcomes. As new agents join the task, those Pareto optimal outcomes calculated should be kept as they will remain Pareto optimal in the new group. When agents leave, remaining group members can get rid of some outcomes that have become dominated in the new setting.
\end{itemize}

As the reader may have noticed, the range of applications where this approach could be applied is varied. We are not claiming that those are the sole applications for this approach, and there may be others in domains like social choice, group recommendations, and so forth.

\section{Experimental study}
\label{sec:experiments}
Section \ref{sec:proof} shows theoretically that Pareto optimal outcomes within a group of agents having complete, transitive and strict preferences are still Pareto optimal when the group size increases with incoming agents. Even when preferences are non strict, we expect for the theorem to hold in most of the cases. In this section, we empirically analyze the prospective performance and applicability of bottom-up approaches. For that purpose, we selected a variety of domains:

\begin{itemize}
 \item \textbf{Sushi domain:} 5000 preference profiles over 10 types of sushi \cite{sushi03}.
 \item \textbf{AGH course selection:} 153 students' preferences over 6 courses offered by AGU University of Science and Technology in 2004 \cite{agh13}.
  \item \textbf{Book crossing domain:} The original dataset contains preferences of 278,858 users that produced 1,149,780 ratings over 271,379 books \cite{book05}. In order to calculate Pareto optimality, we require preferences to be complete on at least a subset of items. We kept 7 users that had rated 23 books in common.
 \item \textbf{Movielens domain:} The original dataset contains 138,000 users that provided ratings over 27,000 movies \cite{movielens03}. As we require complete preferences, we picked 10 preference profiles that had rated a total of 298 movies in common.
 \item \textbf{Holiday domain:} A multi-party negotiation domain available in Genius \cite{genius09}. In this scenario, participants need to decide on the details of a holiday trip. In total, 9 preference profiles over 1024 possible outcomes are available. These preferences have been elicited from TU Delft computer science students, but not with serious plans for a joint holiday in mind.
 \item \textbf{Symposium domain:} Another multi-party negotiation domain that is available in Genius \cite{genius09} concerning the organization of a conference. There are 9 preference profiles over 2305 possible outcomes. These preferences have been elicited from faculty members in computer science of TU Delft experienced in organizing conferences, but not having a specific conference in mind.
  \item \textbf{Party domain:} Another multi-party negotiation domain, where agents decide on the details of a party \cite{genius09}. We elicited preferences from students in a Master level AI course. Students were asked to input their real preferences via Genius based on their tastes for hosting parties. In total, we elicited 24 real preference profiles over 3072 outcomes.
\end{itemize}

From a global perspective, the sushi, agh, and book domain are \textit{small} attending to the number of outcomes. These domains correspond to decision making domains where outcomes are non customizable objects (e.g., a movie, a book, a course, etc.). The data in the Movielens domain is less sparse and we were able to find 10 users that had rated 298 outcomes in common. This is again a domain where outcomes are non customizable, but the size of the domain is one order of magnitude larger than that of the small domains. The three remaining multi-party negotiation domains (i.e., holiday, symposium, and party domain) represent scenarios where the final outcome can be customized via the negotiable issues. As a result, the number of possible outcomes is larger. We consider these domains and the Movielens domain as the \textit{large domains} in our study.

%In this paper, we carried out two different experimental analyses. The goal of the first one is testing the performance of our theoretical finding, whereas the goal of the second one is testing the applicability of the theoretical proof presented in this paper.

\subsection{Validation and Performance Analysis}

Our performance metric is the ratio of the Pareto optimal outcomes within a subgroup with a size of \{2, ..., n-1\} to the Pareto optimal outcomes within the n-sized group.  If the ratio remains low even for large subgroups, then this means that the performance of our theoretical finding may be of little value in practice, as only a small ratio of the final Pareto outcomes may be achievable. However, if the ratio is large, then it may indicate that bottom-up approaches may be valuable. Additionally, common sense indicates that, the larger the subgroup, the higher the ratio of final Pareto optimal outcomes that may be obtained in the subgroup. However, one question that arises is the actual speed by which the ratio of final Pareto outcomes increases, and whether or not subgroups may be able to calculate a respectable ratio of the final Pareto optimal outcomes.

For testing the practical performance of our bottom-up approaches, we randomly generated groups of size $n$ based on the preference profiles available for each domain.  For each randomly generated group, we built all possible subgroups with varying sizes $\{2,..,n-1\}$ and estimated the Pareto optimal set in each (sub)group. More specifically, for each domain we tested a maximum of 1000 groups\footnote{The total number is $min(1000,\binom{m}{n})$, where $m$ is the total number of available preference profiles and $n$ is the size of the group} of size $n= \{5,7,9\}$.

\begin{figure*}[ht]
\includegraphics[width=\linewidth]{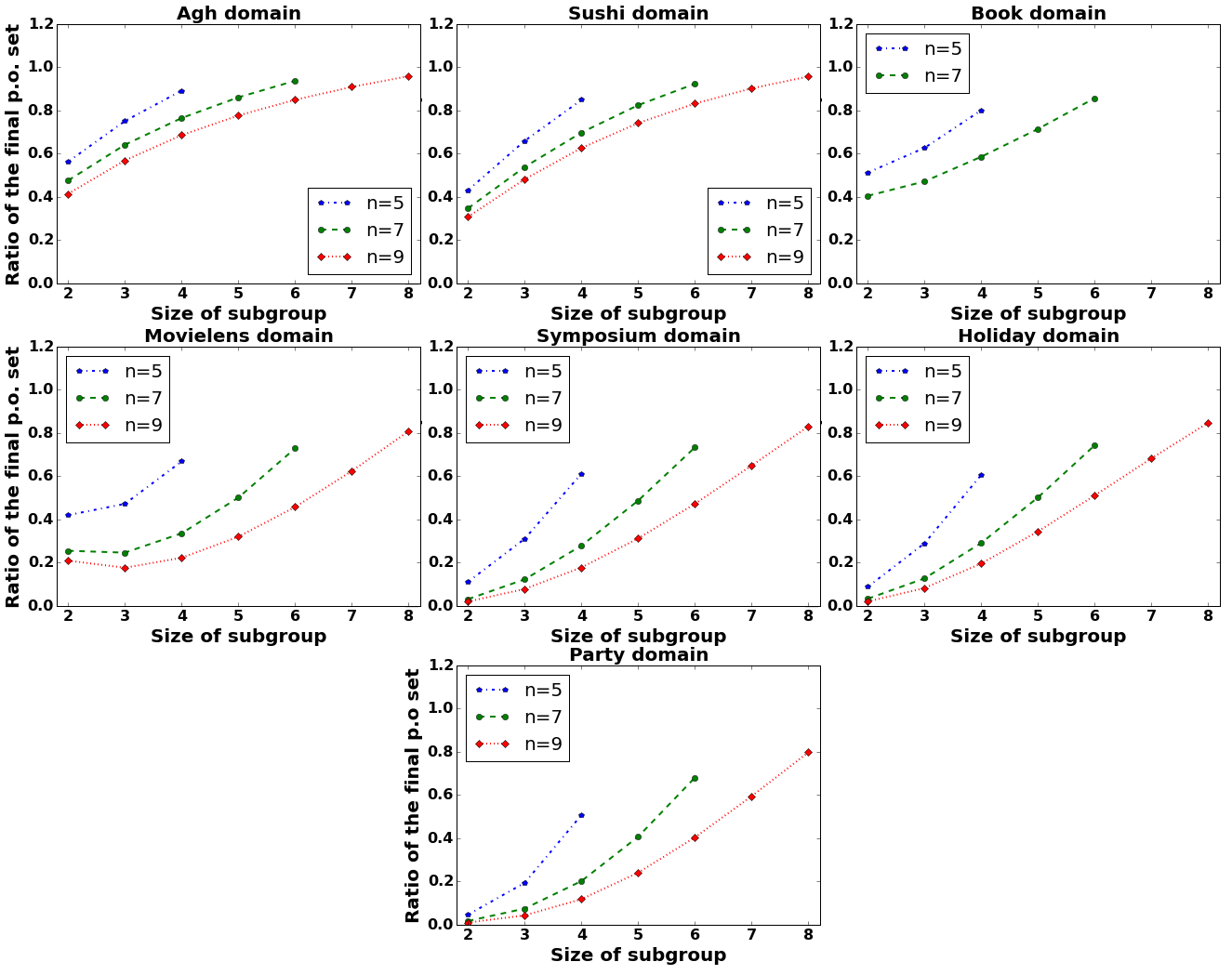}
% \makebox[\textwidth][c]{
%  \minipage{0.4\linewidth}
%  \includegraphics[width=\linewidth]{images/tagh.png}
%  \includegraphics[width=\linewidth]{images/tmov.png}
%   \includegraphics[width=\linewidth]{images/tparty.png}
%  \endminipage
% \minipage{0.4\textwidth}
%  \includegraphics[width=\linewidth]{images/tsus.png}
%  \includegraphics[width=\linewidth]{images/tsym.png}
%  \endminipage
%   \minipage{0.4\linewidth}
%    \includegraphics[width=\linewidth]{images/tbok.png}
%    \includegraphics[width=\linewidth]{images/thol.png}
%  \endminipage
%  }
 \caption{Average ratio of the final Pareto optimal obtained in subgroups of different size}
 \label{fig:evolution}
\end{figure*}

The results of this experiment can be observed in Figure \ref{fig:evolution}. As expected, the results show that the larger the subgroup is, the larger the average ratio of the final Pareto Optimal set that we get. The increase is clearly continuous for all of the domains and group sizes. When we look at the results for groups of 7 and 9 members we observe a non-linear increase with the size of the subgroup. This non-linear increase is not as evident in the case of 5 members' groups, as in that case we only have 3 data points\footnote{Even non-linear functions may look like linear when the number of points is reduced}. %The reason for this non-linear increase is explainable. When a new agent is added to a subgroup, its preferences may conflict with each of the existing members. The number of pairs  As we have more members, more Pareto optimal outcomes may appear by including a new agent in the group by the preference interactions of the new agent with existing group members.

One should highlight that for $n-1$ agents in the subgroup, $n$ being the total number of agents in the group, the average ratio of the Pareto optimal set obtained in the subgroup is always over 50\% of the final set, being close to 80\% in some cases (e.g, smaller domains, larger groups). This is a good result, especially for negotiation team scenarios \cite{team14,team13a,team12a}, where the team could calculate the Pareto set inside the team and use those outcomes in the negotiation with an opponent. This is a clear case where a subgroup of size $n-1$ can be formed (i.e., all of the team members) and, according to the experimental results, obtain a notably high ratio of final Pareto optimal outcomes. Consequently, they can propose Pareto optimal bids without knowing their opponent's preferences.

The result is also notable for smaller subgroups. For instance, in groups of size 5, we are able to obtain between an average of 68\% of the final Pareto set for small domains and 32\% for the larger domains with just about half of the group members (i.e., 3). In the case of groups of size 7, we get 68\% of the final Pareto set for small domains and 28\% for larger domains with just about half of the group members (i.e., 4). Similarly, for groups of size 9 we are able to obtain an average of 76\% of the final Pareto set in small domains, and 30\% in large domains  with just about half of group members (i.e., 5).

%RA: "Despite the data trend, it is still premature to draw any conclusion as it would be necessary to experiment with a wider range of domain sizes and more domains from each size." This sentence is a bit negative. Can we say sth like this ? "Although the current results are promising, we would like to test wider range of domain sizes and different domains to stronger the results gained in this study" ? ??

The trends in the graphics and the results mentioned above may also suggest that larger domains may result in lower ratios of the final Pareto optimal set achievable by subgroups. Nevertheless, as we have been able to observe above, the results can still be considered as positive. Although the current results are promising, we would like to test a wider range of domains and domain sizes to strengthen the results of this study.

\subsection{Applicability Analysis}
\label{sec:applicability}
There are still other aspects that we need to analyze to determine the applicability of bottom-up approaches in real situations. Even though considerable ratios of the final Pareto optimal set are obtainable within subgroups, this may be useless in practice if the total number of Pareto optimal outcomes is very close to all possible outcomes. In those cases, there would be no point in calculating Pareto optimal outcomes in subgroups, as almost any outcome would be Pareto optimal. Therefore, we are interested in checking that the set of final Pareto optimal outcomes does not dramatically approach the total number of outcomes.  In \cite{barry81}, O'Neill studied how Pareto optimality was affected by the number of agents participating in a decision making process. To put it simply, the author proved that the number of Pareto optimal outcomes grows exponentially with the number of agents, with the assumption that all preference profiles are equally probable. Additionally, he proposed a formula to estimate the number of outcomes that are expected to be Pareto optimal based on the size of the domain $m$, and the number of agents in the group $n$: $E(K_{m,n}) = - \overset{m}{\underset{ i=1 }{\sum}} (-1)^i \binom{m}{i} \frac{1}{i^{n-1}}$.
%\end{equation}
%\label{eq:barry}

He also stated that the size of the domain had an effect on the number of outcomes that were Pareto optimal: larger outcome spaces tend to slow down the exponential growth of the Pareto optimal set, although the growth is still exponential. Of course, for drawing such a conclusion, the author had to assume that all preference profiles were equally probable. We argue that, in practice, all preference profiles are not equally probable as in some domains not all of the outcomes may be equally feasible (e.g., high prices in a team of buyers, popular choices in movies, popular choices in travel destinations, etc.). Hence, we argue that the exponential growth may not be as fast as in the theoretical case, and bottom-up approaches may be applicable to more scenarios.

In order to examine this theoretical finding in practice, we calculated the ratio of the Pareto optimal outcomes to the total number of outcomes for each domain and group size. Figure \ref{fig:exponential} shows the average ratio of outcomes that are Pareto optimal for different groups sizes and domains. In these graphs, blue dots represent the average ratios calculated in real scenarios while green dots denote the theoretical estimation provided by \cite{barry81} for domains of the same size. In addition to this, for each data point we provide the total number of cases\footnote{Again, the total number is $min(1000,\binom{m}{n})$}  that were considered for calculating the average. Numbers in red represent less than 30 samples and such averages should be ignored.

\begin{figure}[ht]
\includegraphics[width=\linewidth]{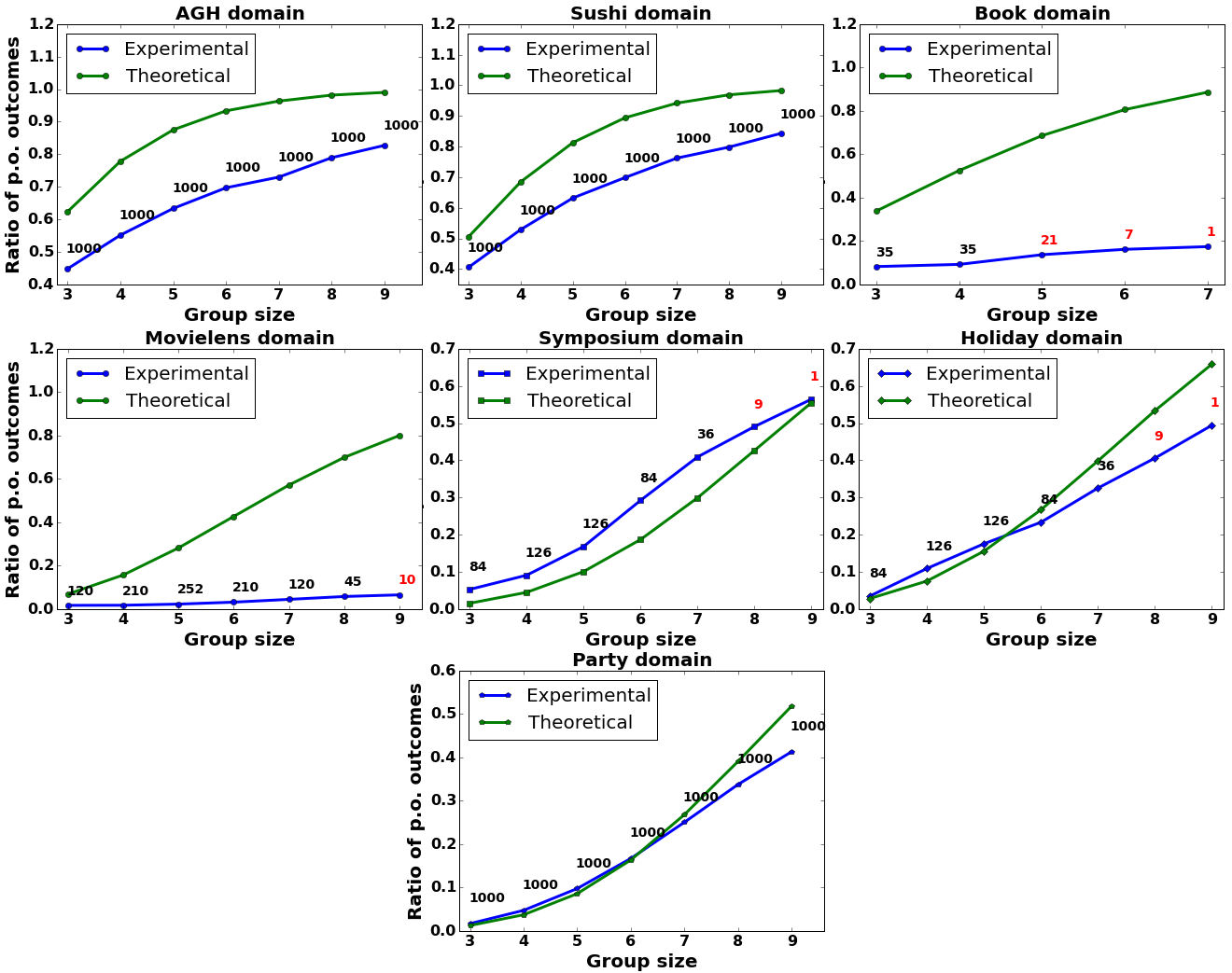}
 %\makebox[\textwidth][c]{
 %\minipage{0.4\linewidth}
 %\includegraphics[width=\linewidth]{images/agh.png}
 %\includegraphics[width=\linewidth]{images/movielens.png}
 %\includegraphics[width=\linewidth]{images/party.png}

% \endminipage
%\minipage{0.4\textwidth}
%\includegraphics[width=\linewidth]{images/sushi.png}
% \includegraphics[width=\linewidth]{images/symposium.png}
% \endminipage
%  \minipage{0.4\linewidth}
%   \includegraphics[width=\linewidth]{images/book.png}
%   \includegraphics[width=\linewidth]{images/holiday.png}
% \endminipage
% }
 \caption{Average ratio of the final Pareto optimal set obtained with subgroups of different size}
 \label{fig:exponential}
 \vspace{-0.5cm}
\end{figure}

As it can be observed in Figure \ref{fig:exponential}, the growth in the number of outcomes that are Pareto optimal is usually slower in real domains than in the theoretical estimation. Being more specific, we observe that only the symposium domain shows a similar behavior to that of the theoretical case. The rest of the domains deviate from the theoretical behavior sooner or later, showing a slower saturation. We can observe that this difference is specially acute in the Movielens, Book, Sushi, and Agh domain, which are the ones whose preferences have been rigorously elicited from real users (except for the party domain).  This may reinforce our initial intuition, that, in real domains, the exponential growth on the number of Pareto optimal outcomes may not be as drastic as in the theoretical case. In other domains like the party and the holiday domain, the difference is less acute but still existent.

In fact, if one analyzes the proposed domains one by one, it is possible to realize that there are general preferential trends. This is clear in domains like Movielens or the Book domain, where we know that some movies and some books tend to be more popular than others. For instance, \textit{The Shawshank Redemption} is one of the most popular movies of all times, and it has been able to obtain average ratings of 9.3 over 10 stars in sites like \textit{IMDB}.\footnote{\url{http://www.imdb.com}. Visited on 16th November 2015}, where it has been voted by more than 1 million users. Similarly, we can find books like \textit{Harry Potter and the Deathly Hallows} that have received an average rating of 4.59 over 5 stars with more than 1 million ratings on sites like \textit{GoodReads}\footnote{\url{http://www.goodreads.com/}. Visited on 16th November 2015}. Finding users that did not like these items has low odds, and as a consequence we can state that not all preference profiles are equally probable. Not only there are general trends in users preferences, but many times we find that there are clusters of users with similar preferences. For instance, in the book domain, we can expect that users that have liked \textit{The Lord of the Rings} will also like other fantasy themed books like \textit{Song of Ice and Fire}. This is the type of patterns exploited by recommender systems, and suggests that the number of likely preference profiles is even smaller.

With respect to the other small domains (e.g., AGH, Sushi), we analyzed the preferences of users. In fact, for analyzing the preferences of users on items we performed a Borda count with all of the preference profiles. We could observe that, in the Sushi domain, there are also some popular choices the \textit{toro} (a total score of 39445)  and some choices that are usually the least liked by users like the \textit{kappa-maki} (a total score of 14928). In the case of the AGH domain, we could also observe that one of the courses (e.g., course 3) was the most preferred one with a score of 731, whereas the least preferred score had almost half the score. This means that in these domains, preferences are not equally distributed and one should not expect such an exponential growth as in the theoretical case.

 With respect to negotiation domains, we elicited real preferences from the Party domain, whereas we used the preference profiles provided by Genius in the Holiday and Symposium domain. Interestingly, we could observe that real users in the Party domain tend to consider the type of food, the type of drinks, and the music as the most important attributes. Even in some specific attributes, we could find that there were popular choices like for instance \textit{Beer only} for drinks, and \textit{Finger-food} and \textit{Chips and Nuts} for food choices. With respect to the rest of negotiation domains, it has to be considered that they were not strictly and rigorously elicited like in the case of the party domain. Users were not contextualized in a specific scenario and their preferences were just elicited from their previous experiences in similar scenarios. In the case of the Holiday domain, we were able to observe some patterns like users considering the duration and the activities as the most important attributes. The users usually preferred longer durations to shorter durations, and we observed a slight positive inclination towards \textit{Historical Places} and \textit{Restaurants}. Even in the rest of less important attributes we were able to find some patterns like the fact that most users preferred \textit{Miami} and \textit{Amsterdam} as destinations. These patterns again show that not all preference profiles are equally likely, and that is reflected in the fact that the experimental growth depicted for Figure \ref{fig:exponential} is slower than the theoretical growth.  On the other hand, the Symposium domain is closer to the theoretical expectation.  This may be explainable due to the fact that the Symposium domain preferences were not elicited thinking on an specific symposium. In contrast to the Holiday domain, which did not follow a rigorous preference elicitation process either, in the Symposium domain it is harder to relate to the scenario, as it includes totally fictional speakers (e.g. \textit{Mr. Talkolot}), whereas in the holiday domain one always can think about his/her own preferences on a trip. This may explain why the increase in the ratio of Pareto optimal outcomes is similar to the theoretical case where preference profiles are equally probable. It should be highlighted that in many negotiation domains, preferences are made different to test the performance of negotiation algorithms in conflicting scenarios.

The fact that, as we have shown, not all preference profiles are equally likely makes bottom-up approaches more applicable to real life scenarios than the results depicted in theory\cite{barry81}. However, it should be noted that, even though the growth is slower, the graphics still suggest an increase with the size of the group and eventually the proof may not be applicable for domains involving a large number of agents. These results raise an interesting trade-off that should be analyzed in the future: the relation between the performance of bottom-up approaches, which increases with the subgroup size, and its applicability, which decreases with the group size, as nearly all outcomes may be Pareto optimal.

%\vspace{-0.5cm}
\begin{table}[]
\centering
\begin{tabular}{|l|l|l|l|l|l|l|l|}
			    \hline
 Group size  & \multicolumn{7}{|c|}{Subgroup size} \\ \hline
                      & 2   & 3  & 4  & 5  & 6  & 7  & 8  \\ \cline{2-8}
 5                    & 7\%    & 4\%   & 1\%   &  -  &  -  &  -  &  -  \\ \cline{2-8}
 7                    & 5\% &  2\%  & 1\%   & 0.7\%   & 0.3\%   &  -  &  -  \\ \cline{2-8}
 9                    & 4\%    & 2\%   & 1\%   & 0.7\%   & 0.4\% & 0.2\%  &   0.07\% \\ \hline
\end{tabular}
\caption{Average \% of false positives calculated in a subgroup}
\label{tab:false}
\vspace{-0.8cm}
\end{table}

There is another additional issue to be studied concerning the applicability of bottom-up approaches. As the reader may have guessed, the aforementioned domains do not guarantee strict preferences. Therefore, some Pareto optimal outcomes calculated in subgroups may not be Pareto optimal in the whole group (we call these false positives). In order to study this, we measured the ratio of false positives in the previous experimental setting. The results are summarized in Table \ref{tab:false}. As it can be observed, the percentage of false positives remains low for every possible scenario, and it tends to decrease with the size of the subgroup. This matches our initial intuition, and shows that the proof presented in this paper practically holds in every situation. Hence, this result supports the applicability of bottom-up approaches in practice.

\section{Related work}
\label{sec:related}
%Since its introduction by the Italian mathematician Vilfred Pareto, Pareto optimality has been an efficiency and stability concept that has had an impact on many disciplines and areas of knowledge. Not only it has been studied in mathematics, but Pareto optimality has been considered a cornerstone in some computer science areas like artificial intelligence, specially in those fields concerned with making decisions by means of automated software (e.g., multi-agent systems, multi-objective optimization, etc.).

As far as these authors are concerned, most of the studies have dedicated their efforts on reaching Pareto optimal solutions using a horizontal approach that involves interactions with all group members.  In \cite{sycara06}, the authors propose a general framework for bilateral negotiations where agents are able to reach near Pareto optimal outcomes by decomposing the negotiation process into iso-utility curves, from where outcomes are proposed based on the similarity to the last offer proposed by the opponent. Later, the authors extend their findings to a multilateral and multi-issue environment where convergence is guaranteed \cite{sycara2015}. Ehtamo \textit{et al.} \cite{ehtamo01} propose a centralized mechanism for achieving Pareto optimal outcomes based on real valued linear additive utility functions and information sharing. Amador \textit{et al.} \cite{amador14} propose a task allocation method for agents with temporal constraints that is capable of providing envy free and Pareto optimal solutions under specific conditions. Other works like \cite{rahwan08} have extended the concept of Pareto optimality to argumentation frameworks. The authors study different agent attitudes, how they relate to the problem of efficiency in abstract argumentation dialogues, and define several situations and scenarios that lead to Pareto optimal arguments. Recently, Hara \textit{et al.} \cite{hara13} proposed a mediated mechanism based on genetic algorithms that is capable of achieving near Pareto optimal outcomes for multi-party negotiations where agents preferences present non-linear relationships and change over time. However, none of these works employ bottom-up approaches, which may prove more useful in some scenarios.

Another field related to our study is that of multi-objective optimization. Pareto optimality is a well-known efficiency measure in multi-objective optimization \cite{horn94,li09,hu13}. Similarly to our multiagent decision setting, researchers in centralized multi-objective optimization have noticed the exponential increase on the number of Pareto optimal outcomes with the number of objective functions \cite{di06,corne07}. Due to this unfortunate property of Pareto optimality, some researchers have offered practical alternatives to the selection of Pareto optimal outcomes. Di Pierro \textit{et al.} define the concept of $k$ optimality for deciding over Pareto optimal outcomes. Basically, a non-dominated outcome is defined as $k$-optimal when that outcome is non-dominated over every possible combination of $k$ objectives. Thus, it results in a stronger concept of optimality that may help to choose a solution over a set of Pareto optimal outcomes. We want to highlight the practical usability of $k$-optimality on future decision making mechanisms for agents and how it complements our current findings. First, based on our proof, a subgroups of agents may calculate Pareto optimal outcomes on subgroups and communicate them to the rest of subgroups. Then, a mechanism may be devised to allow agents to select a $k$-optimal outcome over calculated Pareto optimal outcomes.

Finally, economic and theoretical studies are also a source of related work. As introduced in the text, \cite{barry81} analyzed how the number of Pareto optimal outcomes exponentially increases with the number of agents by assuming that all preference profiles are equally probable. In our present study, we have, among other contributions, shown how real domains in practice behave with regards to Pareto optimality. More specifically, we have shown that, despite the increase in the number of Pareto optimal outcomes with the number of agents, the growth speed is not as quick as portrayed by \cite{barry81}. This is, as far as we know, our closest work in the study of the underlying properties of Pareto optimality. Of course, there have been other successful studies on Pareto optimality for specific domains and problems like characterizing fairness, or studying the relationship between monotonic solutions and Pareto optimality \cite{bogo15,garcia2015}, but their focus of study has not been on the exploration of bottom-up approaches for reaching Pareto optimality.

\section{Conclusion}
\label{sec:discusion}
\vspace{-0.2cm}

In this paper, we have explored the applicability and performance of bottom-up approaches for reaching Pareto optimal outcomes in groups. Our analysis shows that Pareto optimal outcomes in a group remain optimal when increasing the number of agents in the group in many practical scenarios. This has implications for bottom-up approaches, as Pareto optimal outcomes may be calculated in subgroups first, and then be used in scenarios involving the whole group. 

We performed experimental analysis on preferences elicited from users in real-life scenarios and validated that this principle can be applied to a wide range of domains. Our results on performance and applicability indicate that we are able to calculate a significant ratio of the final Pareto optimal frontier within subgroups. Conversely, we analyzed the applicability of our approach by considering how the ratio of Pareto outcomes increases with the size of the group. Our findings highlight that this increase is not as abrupt as expected in theoretical studies, as not all preference profiles are equally likely in many real-life domains. Still, the increase of the ratio of final Pareto optimal outcomes points to a necessary trade off in practice, which we plan to analyze in the future.

%Another interesting aspect that requires more study is deciding on the right Pareto optimal outcomes for the group. The concept of a Pareto optimal outcome does not automatically ensure it is going to be acceptable for the group; for instance, an outcome with the maximum utility for a group member and the minimum utility for another group member may be Pareto optimal, but should definitely not be deemed acceptable for the group. Therefore, a mechanism should be devised to ensure that either those Pareto optimal outcomes calculated in subgroups are beneficial for the group, or a posterior negotiation or social choice procedure should help group members to select the most appropriate outcome for the group afterwards.

Additionally, as a future work, we plan to design novel negotiation approaches for intra-team negotiations that benefit from our findings. In particular, we plan to design a negotiation strategy for negotiation teams, which first calculate the Pareto optimal solutions within the team using our approach, and then target that set of Pareto optimal proposals when negotiating with the opponent.

%\end{enumerate}

\bibliographystyle{abbrv}
\bibliography{aamas2016}

\end{document}